\documentclass[10pt]{article}%

%% M-C. Anisiu, V. Anisiu, Z. Kása, Properties of palindromes in finite words, 
%% Pure  Math. Appl. 17, 3-4 (2006) pp. 183-195. 

\usepackage{graphicx,amsmath,graphicx,amsfonts,amssymb, amsmath}
\setcounter{page}{183} 

\providecommand{\U}[1]{\protect\rule{.1in}{.1in}}
\newtheorem{theorem}{Theorem}
\newtheorem{corollary}{Corollary}
\newtheorem{definition}{Definition}

\newtheorem{example}{Example}
\newtheorem{proposition}{Proposition}
\newtheorem{remark}{Remark}
\newenvironment{proof}[1][Proof]{\noindent\textbf{#1.} }{\hfill $\Box$}

\begin{document}

\title{Properties of palindromes in finite words}
\author{Mira-Cristiana ANISIU\thanks{Tiberiu Popoviciu Institute of Numerical
Analysis, Romanian Academy, Cluj-Napoca, E-mail: \texttt{mira@math.ubbcluj.ro}%
}
\and Valeriu ANISIU\thanks{Babe\c s-Bolyai University of Cluj-Napoca, Faculty of
Mathematics and Computer Science, Department of Mathematics, E-mail:
\texttt{anisiu@math.ubbcluj.ro}}
\and Zolt\'{a}n K\'{A}SA\thanks{Babe\c{s}-Bolyai University of Cluj-Napoca, Faculty
of Mathematics and Computer Science, Department of Computer Science, E-mail:
\texttt{kasa@cs.ubbcluj.ro}}}
\date{}
\maketitle

\begin{abstract}
We present a method which displays all palindromes of a given length from
De Bruijn words of a certain order, and also a recursive one which constructs 
all palindromes of length $n+1$\ from the set of palindromes of length
$n$. We show that the palindrome complexity function, which counts the number
of palindromes of each length contained in a given word, has a different shape
compared with the usual (subword) complexity function. We give upper bounds
for the average number of palindromes contained in all words of length
$n$, and obtain exact formulae for the number of palindromes of length 1 and 2
contained in all words of length $n$.

\end{abstract}

\bigskip{\small \noindent\textbf{AMS 2000 subject classifications:}
68R15\newline} {\small \textbf{Key words and phrases:} finite words,
palindromes, palindrome complexity}

\section{Introduction}

The palindrome complexity of infinite words has been studied by several
authors (see \cite{abcd}, \cite{br}, \cite{aa} and the references therein). Similar
problems related to the number of palindromes are important for finite words
too. One of the reasons is that palindromes occur in DNA sequences (over
$4$\ letters) as well as in protein description (over $20$\ letters), and
their role is under research (\cite{gp}).

Let an alphabet $A$ with $\textrm{card}(A)=q\geq1$ be given. The set of
the words over $A$ will be denoted by $A^*$, and the set of words of length $n$\  by $A^{n}$.  

Given a word $w=w_{1}w_{2}...w_{n},$ the \emph{reversed} of $w$\ is
$\widetilde{w}=w_{n}...w_{2}w_{1}$.\ Denoting by $\varepsilon$\ the empty
word, we put by convention $\widetilde{\varepsilon}=\varepsilon$. The word $w
$\ is a \emph{palindrome} if $\widetilde{w}=w$. We denote by $a^{k}$\ the word
$\underbrace{a\ldots a}_{k \textrm{ times}}$. The set of the subwords of a word $w$\ which are nonempty
palindromes will be denoted by $\mathrm{PAL}(w)$. The (infinite) set of all palindromes
over the alphabet $A$\ is denoted by $\mathrm{PAL}(A)$, while $\mathrm{PAL}_{n}(A)=\mathrm{PAL}(A)\cap A^{n}$.

\section{Storing and generating palindromes}

An old problem asks if, given an alphabet $A$ with $\textrm{card}$($A)=q,$   there exists a shortest word of length $q^{k}+k-1$\ containing all the $q^{k}$ words of length $k$. The answer is affirmative and was given in
\cite{f}, \cite{good}, \cite{br1}. For each $k\in\mathbb{N}$, these words are
called De Bruijn words of order $k$. This property can be proved by means of
the Eulerian cycles in the De Bruijn graph $B_{k-1}$. If a window of length
$k$\ is moved along a De Bruijn word, at each step a different word is seen,
all the $q^{k}$\ words being displayed.

We ask if it is possible to arrange all  palindromes of length $k$\ in a
similar way. The answer is in general no, excepting the case of the two
palindromes $aba...a$\ and $bab...b$ of odd length.

\begin{proposition}
Given a word $w\in A^{n}$\ and $k\geq2$, the following statements are equivalent:

(1) all the subwords of length $k$\ are palindromes;

(2) $n$\ is even, $k=n-1$\ and there exists $a,b\in A$, $a\neq b$\ so that
$w=(ab)^{n/2}$. 

Furthemore, in this case the only palindromes of $w$\ are $(ab)^{n/2-2}a$\ and
$(ba)^{n/2-2}b$.
\end{proposition}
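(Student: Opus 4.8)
The plan is to prove the two implications separately, dispatching $(2)\Rightarrow(1)$ by direct verification and putting the weight on $(1)\Rightarrow(2)$. For $(2)\Rightarrow(1)$ I would note that a word of length $n$ has exactly two factors of length $n-1$, its prefix and its suffix; for $w=(ab)^{n/2}$ these are $a(ba)^{n/2-1}$ and $(ba)^{n/2-1}b$, each an alternating word of the odd length $n-1$ and hence a palindrome, so (1) holds at once.

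For $(1)\Rightarrow(2)$ the first step is to extract a period from overlapping windows. One may assume $k<n$ (when $k=n$ the only factor is $w$ and nothing further is forced, a degenerate situation outside the scope of (2)), so at least two consecutive factors of length $k$ are available. Writing that both $w_i\cdots w_{i+k-1}$ and $w_{i+1}\cdots w_{i+k}$ coincide with their reversals gives $w_{i+t}=w_{i+k-1-t}$ and $w_{i+1+t}=w_{i+k-t}$; comparing the two families of equalities yields $w_j=w_{j+2}$ for every admissible $j$. Sliding the pair of windows along $w$ propagates this to the whole word, so $w$ has period two and is determined by $w_1w_2$: it is either the constant word $a^n$ or the alternating word on two letters $a\neq b$. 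The constant word is the degenerate case in which the window shows the single palindrome $a^k$, excluded by the requirement $a\neq b$ in (2); so I may assume $w$ is genuinely alternating.

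The crux is the parity-and-boundary analysis fixing $k$ and $n$. Since $a\neq b$, a factor of the alternating word begins and ends with the same letter exactly when its length is odd, so such a factor is a palindrome if and only if its length is odd; applied to the windows this forces $k$ to be odd. It remains to show $k=n-1$, whence $n$ is even. This is the step I expect to be the main obstacle, because periodicity alone is consistent with any odd window length; it is precisely here that the displayed-palindrome reading of (1) must be used. I would argue it at the boundary: the window cannot be extended to length $k+1$ and remain a palindrome (an even-length alternating factor never is), while the one factor longer than the windows, $w$ itself, is not a palindrome since $\widetilde{(ab)^{n/2}}=(ba)^{n/2}\neq w$; hence the extremal admissible window length is $n-1$. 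Together these give $n$ even, $k=n-1$, and $w=(ab)^{n/2}$ with $a\neq b$.

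For the closing assertion I would enumerate the palindromic factors of $w=(ab)^{n/2}$ directly: every one is an odd-length alternating block, obtained by reading a prefix or suffix of the appropriate odd length, and among these I would single out the two words named in the statement and check that each equals its reversal while no even-length factor can. This identifies the palindromes displayed by the window and completes the proof.
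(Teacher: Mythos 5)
Your period-two extraction from overlapping palindromic windows and the parity observation (a factor of an alternating word is a palindrome exactly when its length is odd) are sound, and they match the index computations in the paper's own proof. The genuine gap is at precisely the step you yourself flag as the crux: forcing $k=n-1$. The boundary argument you offer is a non sequitur: the two facts you invoke --- that no even-length factor of an alternating word is a palindrome, and that $w=(ab)^{n/2}$ itself is not a palindrome --- hold for \emph{every} odd $k<n$, so they cannot single out $k=n-1$. Worse, under the literal reading of (1) no correct argument can exist, because the implication is then false: for $w=ababab$ and $k=3$ every length-$3$ factor is $aba$ or $bab$, both palindromes, yet $k\neq n-1$. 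What is actually needed (and what the paper uses implicitly, in keeping with the De Bruijn analogy that frames this section) is the stronger reading that successive window positions must display \emph{distinct} palindromes. With that hypothesis the conclusion is immediate: in the odd case the windows alternate between $ab\cdots a$ and $ba\cdots b$, so a third window would repeat the first one --- this is exactly the paper's closing remark that ``if another palindrome will follow, it must be again $(ab)^{n/2}$ (equal with the first one)'' --- whence $n-k+1\leq 2$, i.e.\ $k=n-1$ and $n$ is even. You explicitly acknowledge that ``the displayed-palindrome reading of (1) must be used,'' but the argument you then give never uses it.

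The same defect occurs in your treatment of the constant word. You discard $w=a^{n}$ because it is ``excluded by the requirement $a\neq b$ in (2)''; that is circular when the implication being proved is $(1)\Rightarrow(2)$ --- you must show that $a^{n}$ violates (1), and under the literal reading it does not, since every window is the palindrome $a^{k}$. The paper rules this case out by the same distinctness requirement: its analysis of even $k$ shows the two consecutive windows are forced to coincide (both equal to $a^{k}$), which is the contradiction; the case $a=b$ in the odd analysis is identical. Your dismissal of $k=n$ as ``degenerate'' has the same character (any palindrome $w$ of length $n$ satisfies literal (1) with $k=n$), though here the paper is equally tacit, since its proof silently assumes two windows exist. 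In short, your skeleton agrees with the paper's --- overlap gives period two, then parity --- but the decisive counting step, at most two window positions and hence $k=n-1$, is absent, and nothing in your proposal substitutes for it.
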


\begin{proof}
Let us consider the first two palindromes $a_{1}a_{2}...a_{k}$\ and
$b_{1}b_{2}...b_{k}$\ such that $a_{2}a_{2}...a_{k}=b_{1}b_{2}...b_{k-1}$,
hence
\[
a_{k-i+1}=a_{i}=b_{i-1}=b_{k-i+2},\ \ i=2,...,k.
\]
It follows
\[%
\begin{array}
[c]{lll}%
i=2 &  & a_{k-1}=a_{2}=b_{1}=b_{k}\\
i=3 &  & a_{k-2}=a_{3}=b_{2}=b_{k-1}\\
i=4 &  & a_{k-3}=a_{4}=b_{3}=b_{k-2}\\
&  & .\ .\ .\ \\
i=k-1 &  & a_{2}=a_{k-1}=b_{k-2}=b_{3}\\
i=k &  & a_{1}=a_{k}=b_{k-1}=b_{2}.
\end{array}
\]
\noindent If $k=2l$, ($l\geq1$)\ we have $b_{2}=a_{1}=a_{3}=...=a_{k-1}$ and
$b_{3}=a_{2}=...=a_{k}$ and $a_{1}a_{2}...a_{k}$ is a palindrome if and only
if $a_{1}=a_{2}=...=a_{k}$, hence $a_{1}a_{2}...a_{k}=a^{k}$; it follows that
$b_{1}b_{2}...b_{k}=a^{k}$\ too, and the two palindromes are equal.

\allowbreak\noindent If $k=2l+1$, we have $b_{2}=a_{1}=a_{3}=...=a_{k}$ and
$b_{3}=a_{2}=...=a_{k-1,}$ hence $a_{1}a_{2}...a_{k}=abab...a$ ($a\neq b$) and
$b_{1}b_{2}...b_{k}=bab...b$. If another palindrome will follow, it must be
again $(ab)^{n/2}$ (equal with the first one).
\end{proof}

\begin{remark}
For $k=1$, the maximum length of a word containing all distinct
palindromes of length $1$ (i.e. letters) exactly once is $n=q$.
\end{remark}

It is obvious that for $k\geq2$ it is not possible to arrange all 
palindromes of length $k$\ in the most compact way. But each palindrome is
determined by the parity of its length and its first $\left\lceil
k/2\right\rceil $\ letters, where $\left\lceil \cdot\right\rceil $\ denotes
the ceil function (which return the smallest integer that is greater than or
equal to a specified number).

\begin{proposition}
All palindromes of length $k$\ can be obtained from a De Bruijn word of
length $q^{\left\lceil k/2\right\rceil }+\left\lceil k/2\right\rceil -1$.
\end{proposition}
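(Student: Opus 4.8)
The plan is to exploit the observation preceding the statement: a palindrome is completely determined by the parity of its length together with its first $\lceil k/2\rceil$ letters. Setting $m=\lceil k/2\rceil$, I would first make this precise by constructing an explicit bijection $\phi\colon A^{m}\to\mathrm{PAL}_{k}(A)$. Given $u=u_{1}u_{2}\ldots u_{m}\in A^{m}$, define $\phi(u)$ to be the unique palindrome of length $k$ whose first $m$ letters are $u$. Concretely, when $k=2m$ is even one takes $\phi(u)=u\widetilde{u}$, and when $k=2m-1$ is odd one takes $\phi(u)=u_{1}\ldots u_{m-1}u_{m}u_{m-1}\ldots u_{1}$, where the central letter $u_{m}$ is not repeated. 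In either case $\phi(u)$ is visibly a palindrome of length $k$, and its first $m$ letters recover $u$, so $\phi$ is injective; conversely every palindrome of length $k$ arises in this way, so $\phi$ maps onto $\mathrm{PAL}_{k}(A)$. Since $\mathrm{card}(A^{m})=q^{m}=\mathrm{card}(\mathrm{PAL}_{k}(A))$, the map $\phi$ is a bijection.

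The second ingredient is the defining property of De Bruijn words recalled at the start of this section: a De Bruijn word $B$ of order $m$ has length $q^{m}+m-1$, and as a window of length $m$ is slid along $B$ one sees, at the $q^{m}$ successive positions, each word of $A^{m}$ exactly once. Thus the set of length-$m$ factors occurring in $B$ is precisely $A^{m}$.

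Combining the two, I would conclude as follows. Take a De Bruijn word $B$ of order $m=\lceil k/2\rceil$, whose length is $q^{\lceil k/2\rceil}+\lceil k/2\rceil-1$, as required. For each of its $q^{m}$ length-$m$ windows $u$, form the palindrome $\phi(u)$; as $u$ ranges over all of $A^{m}$, the words $\phi(u)$ range over all of $\mathrm{PAL}_{k}(A)$. Hence every palindrome of length $k$ is produced from $B$ by reading its successive length-$m$ windows and folding each one into a palindrome according to the parity of $k$.

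The only real work is the first step, the careful description of $\phi$ and the separate treatment of the even and odd cases; once the bijection between half-words and palindromes is in hand, the De Bruijn property does the rest, and there is no serious obstacle. I would take care, in writing $\phi$ for odd $k$, not to double the central letter, and to note that the construction reads only the first $\lceil k/2\rceil$ letters determining each palindrome, which is exactly what matches the order of the De Bruijn word to the stated length.
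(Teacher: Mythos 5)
Your proposal is correct and follows essentially the same route as the paper's proof: slide a window of length $\lceil k/2\rceil$ along the De Bruijn word of that order to see every half-word, then extend each one to a palindrome by mirroring (doubling the last letter for even $k$, not doubling it for odd $k$). The only difference is presentational — you spell out the extension map as an explicit bijection with the injectivity/surjectivity check, which the paper leaves implicit.
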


\begin{proof}
The De Bruijn word contains all different words of length $\left\lceil
k/2\right\rceil $. Each such word $a_{1}...a_{\left\lceil k/2\right\rceil }%
$\ can be extended to a palindrome by symmetry, for $k$\ even, and by taking
$a_{\left\lceil k/2\right\rceil +1}=a_{\left\lceil k/2\right\rceil -1}$,
...,$a_{k}=a_{1}$, for $k$\ odd.
\end{proof}

\begin{example}
Let $k=3$, $q=3$ and the De Bruijn word of order $\left\lceil k/2\right\rceil $ $=2$ $w_{1}= 0221201100$. From each word of length
$2$\ which appears in the given De Bruijn word, we obtain the corresponding
palindrome of length $k=3$:%
\[%
\begin{array}
[c]{ccc}%
02 & \rightarrow & 020\\
22 & \rightarrow & 222\\
21 & \rightarrow & 212\\
12 & \rightarrow & 121\\
20 & \rightarrow & 202\\
01 & \rightarrow & 010\\
11 & \rightarrow & 111\\
10 & \rightarrow & 101\\
00 & \rightarrow & 000.
\end{array}
\]
Let $k=4$, $q=2$ and the De Bruijn word of order $\left\lceil k/2\right\rceil =2$ $w_{2}=01100$. From each word of length $2$\ contained in
$01100$\ we obtain by symmetry the corresponding palindrome of length $k=4$:%
\[%
\begin{array}
[c]{ccc}%
01 & \rightarrow & 0110\\
11 & \rightarrow & 1111\\
10 & \rightarrow & 1001\\
00 & \rightarrow & 0000.
\end{array}
\]
\end{example}

There are several algorithms which construct De Bruijn words, for example, in
\cite{m}, \cite{r}, \cite{fr} and  \cite{g}.

We can generate recursively all  palindromes of length $n$, $n\in\mathbb{N}$, using the
difference representation. This is based on the following proposition.

\begin{proposition}
If  $w_1, w_2,\; \ldots ,\; w_p $
are all binary $(A=\{0,1\})$ palindromes of length $n$, $\textrm{ where } p=2^{\left\lceil \frac{n}{2}\right\rceil }$,  $n\ge 1$, then 
\[2w_1, 2w_2,\; \ldots \;, 2w_p, 2^{n+1}+1+w_1, 2^{n+1}+1+w_2,\;\ldots ,\;2^{n+1}+ 1+w_p\]
are all palindromes of length $n+2$.
\end{proposition}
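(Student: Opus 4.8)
The plan is to reduce the Proposition to the single elementary observation that a palindrome of length $n+2$ is exactly a palindrome of length $n$ flanked by one extra letter on each side, the two flanking letters being forced to coincide. Concretely, I would first establish the set identity
\[
\mathrm{PAL}_{n+2}(A)=\{\,0w0 : w\in\mathrm{PAL}_n(A)\,\}\cup\{\,1w1 : w\in\mathrm{PAL}_n(A)\,\},
\]
the two families on the right being disjoint, and then translate the two wrapping operations $w\mapsto 0w0$ and $w\mapsto 1w1$ into arithmetic on the binary values. Reading a length-$m$ word $s_1\cdots s_m$ as the integer $\sum_{i=1}^m s_i\,2^{m-i}$, prepending and appending a letter to $w$ shifts every bit of $w$ up by one place (contributing $2w$) and then adds the two end bits; hence $0w0$ has value $2w$ and $1w1$ has value $2^{n+1}+2w+1$. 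As $w$ ranges over $\mathrm{PAL}_n(A)$ this yields precisely the first family $2w_i$ together with the second family built from the extremal bits $2^{n+1}$ and $1$, so the integer list in the statement is just the numerical shadow of the two wrapping maps.

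For the set identity itself I would prove both inclusions directly from the reversal definition. The inclusion $\supseteq$ is immediate: if $\widetilde w=w$ then for $a\in\{0,1\}$ the word $awa$ reads the same forwards and backwards, so $awa\in\mathrm{PAL}_{n+2}(A)$. For $\subseteq$, take $s=s_1\cdots s_{n+2}\in\mathrm{PAL}_{n+2}(A)$, so that $s_i=s_{n+3-i}$ for every $i$. The case $i=1$ forces $s_1=s_{n+2}=:a$, whence $s=a\,u\,a$ with $u=s_2\cdots s_{n+1}$ of length $n$; and for each $1\le j\le n$ the relation $s_{j+1}=s_{n+2-j}$, read off from the defining symmetry, shows that $u$ equals its own reversal, i.e.\ $u\in\mathrm{PAL}_n(A)$. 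This step — verifying that the inner block inherits the palindrome property — is the heart of the argument, and I expect it to be the only genuinely delicate point, although it amounts to little more than reindexing the symmetry $s_i=s_{n+3-i}$.

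Finally I would count, to confirm that the $2p$ listed words are pairwise distinct and exhaust $\mathrm{PAL}_{n+2}(A)$. Distinctness across the two families is free, since $0w0$ begins with $0$ while $1w'1$ begins with $1$; within each family the maps $w\mapsto 0w0$ and $w\mapsto 1w1$ are injective because $w$ is recovered as the inner block. Hence the right-hand side has exactly $2p=2\cdot 2^{\lceil n/2\rceil}=2^{\lceil n/2\rceil+1}$ elements. Since $\lceil (n+2)/2\rceil=\lceil n/2\rceil+1$ and, as already noted in this paper, a palindrome is determined by its first $\lceil k/2\rceil$ letters, this equals $|\mathrm{PAL}_{n+2}(A)|$. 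Combined with the set identity, this shows the list enumerates $\mathrm{PAL}_{n+2}(A)$ without repetition, completing the proof; everything beyond the inner-palindrome check is bookkeeping.
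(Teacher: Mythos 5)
Your proposal is correct and takes essentially the same approach as the paper, whose one-sentence proof is exactly the observation you elaborate — that $0w0$ and $1w1$ are palindromes and are the only palindromes of length $n+2$ with central block $w$ — which you then make rigorous via the set identity, the injectivity/counting check, and the translation into binary arithmetic. One point you glossed over: your (correct) computation that $1w1$ has value $2^{n+1}+2w+1$ shows the second family in the statement should read $2^{n+1}+1+2w_i$ rather than $2^{n+1}+1+w_i$ (as printed it fails, e.g., for $n=3$, $w=101=5$: $2^{4}+1+5=22=10110_2$ is not a palindrome), so your proof in fact establishes the corrected version of the statement and it would be worth saying so explicitly.
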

\begin{proof}
If $w$ is a binary palindrome of length $n$, then $0w0$ and $1w1$ will be palindromes too, and the only palindromes of length $n+2$ which contains $w$ as a subword, which proves the proposition.
\end{proof}

In order to generate all binary palindromes of a given length let us begin with an example
considering all binary palindromes of length 3 and 4 and their decimal representation:

$%
\begin{array}
[c]{lllr}%
000 \; & 0 \quad\qquad & 0000 \; & 0\\
010 & 2 & 0110 & 6\\
101 & 5 & 1001 & 9\\
111 & 7 & 1111 & 15
\end{array}
$

The sequence of palindromes in increasing order based on their decimal value for
a given length can be represented by their differences. The difference
representation of the sequence 0, 2, 5, 7 is 2, 3, 2 ($2-0=2$, $5-2=3$,
$7-5=2$), and the difference representation of the sequence 0, 6, 9, 15 is 6,
3, 6. A difference representation is always a simmetric sequence and the
corresponding sequence of palindromes in decimal can be obtained by successive
addition beginning with 0: $\mathbf{0}+6=\mathbf{6}$, $\mathbf{6}%
+3=\mathbf{9}$, $\mathbf{9}+6=\mathbf{15}$. By direct computation we obtain
the following difference representation of palindromes for length $n\leq8$.

\medskip\noindent%
\begin{tabular}
[c]{@{}c@{}rrrrrrrrrrrrrrr}%
$n$ &  &  &  &  &  &  &  &  &  &  &  &  &  &  & \\
\emph{1} & 1 &  &  &  &  &  &  &  &  &  &  &  &  &  & \\
\emph{2} & 3 &  &  &  &  &  &  &  &  &  &  &  &  &  & \\
\emph{3} & 2 & 3 & 2 &  &  &  &  &  &  &  &  &  &  &  & \\
\emph{4} & 6 & 3 & 6 &  &  &  &  &  &  &  &  &  &  &  & \\
\emph{5} & 4 & 6 & 4 & 3 & 4 & 6 & 4 &  &  &  &  &  &  &  & \\
\emph{6} & 12 & 6 & 12 & 3 & 12 & 6 & 12 &  &  &  &  &  &  &  & \\
\emph{7} & 8 & 12 & 8 & 6 & 8 & 12 & 8 & 3 & 8 & 12 & 8 & 6 & 8 & 12 & 8\\
\emph{8} & \;\,24 & 12 & 24 & 6 & 24 & 12 & 24 & 3 & 24 & 12 & 24 & 6 & 24 &
\;12 & \;24
\end{tabular}

\medskip We easily can generalize and prove by induction that
 the difference representations can be obtained as follows.

For $n=2k$ we have the difference representation:
\[
a_{1}, a_{2},\;  \ldots, \; a_{2^{k}-1},
\]
from which the difference representation for $2k+1$ is:
\[
2^{k}, a_{1}, 2^{k}, a_{2},2^{k},\; \ldots, 2^{k},\; a_{2^{k}-1},2^{k}.
\]

For $n=2k+1$ we have the difference representation:
\[
2^{k}, a_{1}, 2^{k}, a_{2},2^{k},\; \ldots, 2^{k},\; a_{2^{k}-1},2^{k},
\]
from which the difference representation for $2k+2$ is:
\[
3\cdot2^{k}, a_{1}, 3\cdot2^{k}, a_{2}, 3\cdot2^{k},\; \ldots, \; 3\cdot2^{k},
a_{2^{k}-1},3\cdot2^{k}.
\]

This representation can be generalized for $q\ge2$. The number of palindromes
in this case is $q^{\left\lceil \frac{n}{2} \right\rceil }$.

For $n=2k$ we have the difference representation:
\[
a_{1}, a_{2},\; \ldots, \; a_{q^{k}-1},
\]
from which the difference representation for $2k+1$ is:
\[
\underbrace{q^{k},\ldots,  q^{k}}_{q-1 \text{ times}},\; a_{1}, \;
\underbrace{q^{k},\ldots, q^{k}}_{q-1 \text{ times}}, \; a_{2}, \;\underbrace
{q^{k},\ldots,  q^{k}}_{q-1 \text{ times}}, \; \ldots, \; \underbrace{q^{k},\ldots, 
q^{k}}_{q-1 \text{ times}}, \; a_{q^{k}-1},\; \underbrace{q^{k},\ldots,  q^{k}%
}_{q-1 \text{ times}}.
\]

For $n=2k+1$ we have the difference representation:
\[
\underbrace{q^{k},\ldots, q^{k}}_{q-1\text{ times}},\;a_{1},\;\underbrace
{q^{k},\ldots, q^{k}}_{q-1\text{ times}},\;a_{2},\;\ldots,\;a_{q^{k}%
-1},\;\underbrace{q^{k},\ldots, q^{k}}_{q-1\text{ times}},
\]
from which the difference representation for $2k+2$ is:
\[
\underbrace{(q+1)q^{k},\ldots, (q+1)q^{k}}_{q-1\text{ times}},\;a_{1}%
,\;\underbrace{(q+1)q^{k},\ldots, (q+1)q^{k}}_{q-1\text{ times}},\;a_{2},
\]%
\[
\ldots,\;\underbrace{(q+1)q^{k},\;\ldots,\;(q+1)q^{k}}_{q-1\text{ times}%
},\;a_{q^{k}-1},\;\underbrace{(q+1)q^{k},\ldots, (q+1)q^{k}}_{q-1\text{ times}}.
\]

\section{The shape of the palindrome complexity functions}

For an infinite sequence $U,$ the \textit{(subword)} \textit{complexity
function} $p_{U}:\mathbb{N}\longrightarrow\mathbb{N}$ (defined in \cite{mh} as
the \textit{block growth}, then named \textit{subword complexity} in
\cite{elr}) is given by $p_{U}(n)=$ $\textrm{card}(F(U)\cap A^{n})$ for
$n\in\mathbb{N},$ where $F(U)$ is the set of all finite subwords (factors) of $U$. 
Therefore the complexity function  maps each nonnegative number $n$ to the number of
subwords of length $n$ of $U;$ it verifies the iterative equation
\begin{equation}
p_{U}(n+1)=p_{U}(n)+\sum\limits_{j=2}^{q}(j-1)s(j,n),\label{f1}%
\end{equation}
$s(j,n)$ being the cardinal of the set of the subwords in $U$ having the length
$n$ and the right valence $j.$ A subword $u\in U$\ \emph{has the right
valence} $j$ if there are $j$\ and only $j$\ distinct letters $x_{i}$\ such
that $ux_{i}\in F(U)$, $1\leq i\leq j$.

For a finite word $w$ of length $n,$ the \textit{complexity function}
$p_{w}:\mathbb{N}\longrightarrow\mathbb{N}$ given by $p_{w}(k)=$
$\textrm{card}(F(w)\cap A^{k}),$ $k\in\mathbb{N},$ has the property
that $p_{w}(k)=0$ for $k>n.$ The corresponding iterative equation is
\begin{equation}
p_{w}(k+1)=p_{w}(k)+\sum\limits_{j=2}^{q}(j-1)s(j,k)-s_{0}(k),\label{f2}%
\end{equation}
where $s_{0}(k)=s(0,k)\in\{0,1\}$ stands for the cardinal of the set of
subwords $v$\ (suffixes of $w$ of length $k$) which cannot be continued as
$vx\in F(w)$,\ $x\in A$. We can write (\ref{f2}) in a condensed form
\begin{equation}
p_{w}(k+1)=p_{w}(k)+\sum\limits_{j=0}^{q}(j-1)s(j,k).\label{f3}%
\end{equation}

The above relations have their correspondents in terms of left extensions of
the subwords.

For an infinite sequence $U,$ the complexity function $p_{U}$ is
nondecreasing; more than that, if there exists $m\in\mathbb{N}$ such that
$p_{U}(m+1)=p_{U}(m),$ then $p_{U}$ is constant for $n\geq m$.

The complexity function for a finite word $w$ of length $n$ has a different
behaviour, because of $p_{w}(n)=1$ (there is a unique subword of length $n,$
namely $w$). It was proved (\cite{h}, \cite{adl}, \cite{l-s}, \cite{ac}) that the shape of the complexity function is trapezoidal.

\begin{theorem}
Given a finite word $w$ of length $n,$ there are three intervals of
monotonicity for $p_{w}$: $[0,J],$ $[J,M]$ and $[M,n]$; the function increases
at first, is constant and then decreases with the slope $-1$.
\end{theorem}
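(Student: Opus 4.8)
The plan is to study the first difference $d(k)=p_{w}(k+1)-p_{w}(k)$ and to show that, as $k$ runs from $0$ to $n-1$, the sequence $d(k)$ follows the sign pattern ``positive, then zero, then $-1$''. It suffices to prove three facts: (A) $d(k)\ge -1$ for all $k$; (B) $\{k:d(k)>0\}$ is an initial segment of $\{0,\dots,n-1\}$; (C) $\{k:d(k)=-1\}$ is a final segment. Indeed, (A) confines every $d(k)$ to $\{-1,0,1,2,\dots\}$, and (B)--(C) then force $d=(>0)\cdots(>0)\,0\cdots0\,(-1)\cdots(-1)$; setting $J=\min\{k:d(k)\le 0\}$ and $M=\min\{k:d(k)=-1\}$ yields the three announced intervals $[0,J]$, $[J,M]$, $[M,n]$. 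Throughout I would use the condensed formula (\ref{f3}), writing $d(k)=H(k)-s_{0}(k)$ with $H(k)=\sum_{j=2}^{q}(j-1)s(j,k)\ge 0$ (the ``right branching'') and $s_{0}(k)\in\{0,1\}$.

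The easy parts are (A) and (C). For (A): each right-extendable factor of length $k$ injects into factors of length $k+1$ by appending one fixed legal letter, so $p_{w}(k+1)\ge p_{w}(k)-s_{0}(k)\ge p_{w}(k)-1$. For (C): note $d(k)=-1$ holds precisely when $H(k)=0$ and $s_{0}(k)=1$, and I would show both conditions propagate from $k$ to $k+1$. If no factor of length $k$ is right-special (valence $\ge 2$), then the length-$k$ suffix of any right-special factor $z$ of length $k+1$ would itself be right-special (it inherits the two extensions of $z$), a contradiction; hence $H(k)=0\Rightarrow H(k+1)=0$. And if the length-$k$ suffix of $w$ occurs only once, so does the length-$(k+1)$ suffix, since a second occurrence of the latter would contain a second occurrence of the former. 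Thus $\{d=-1\}$ is upward closed. The same suffix argument shows $s_{0}$ is nondecreasing, so for (B) I assume $d(k)\le 0$ and seek $d(k+1)\le 0$: if $s_{0}(k)=0$ then $H(k)=0$ and (C) applies, and if $s_{0}(k)=1$ with $H(k)=0$ again (C) applies; the only remaining case is $s_{0}(k)=1$, $H(k)=1$, which is the crux.

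The crux, and the main obstacle, is: \emph{if the length-$k$ suffix $v$ of $w$ is unrepeated and there is a unique right-special factor $u_{0}$ of length $k$ (necessarily of valence $2$), then at most one factor of length $k+1$ is right-special, i.e.\ $H(k+1)\le 1$.} The plan is to read this off the Rauzy graph $G_{k}$ whose vertices are the length-$k$ factors and whose edges are the length-$(k+1)$ factors, so that the out-degree of a vertex is its right valence. Under the hypotheses every vertex has out-degree $1$ except $u_{0}$ (out-degree $2$) and $v$ (out-degree $0$); thus $G_{k}$ is deterministic away from $u_{0}$, and the word $w$ traces a walk ending at the unique sink $v$. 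Following the two edges leaving $u_{0}$ along their forced paths, exactly one path returns to $u_{0}$ and the other runs to $v$: if both returned the finite walk could never reach $v$, and if both reached $v$ then $u_{0}$ would be visited only once, contradicting valence $2$. Consequently every revisit of $u_{0}$ enters through one and the same in-edge $\xi u_{0}$, while only the first visit may enter through a different in-edge and then exits only once; a short analysis of this visit sequence shows that at most one in-edge of $u_{0}$ carries both exit letters, so $H(k+1)\le 1$ and $d(k+1)\le 0$, closing (B).

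I expect the determinism argument to be the delicate point: it is exactly here that the finiteness of $w$ enters, the unique suffix $v$ serving as the unique sink that both forces termination and, together with $H(k)=1$, collapses all returns to a single in-edge; I would need to treat carefully the first visit to $u_{0}$ (whose in-edge may be the start vertex rather than a genuine left extension), the case of only two visits, and trivial return paths $u_{0}\to u_{0}$. I would emphasize that this sign-pattern route is what makes the statement tractable, since neither $p_{w}$ nor the branching $H$ is monotone in general: for instance $w=aabba$ gives $p_{w}=1,2,4,3,2,1$ and $H=1,2,0,0$, so any attempt to prove outright concavity of $p_{w}$ is bound to fail even though the profile is trapezoidal.
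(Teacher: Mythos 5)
The paper offers no proof of this theorem: it is stated as a known result, with the proof delegated to the cited references \cite{h}, \cite{adl}, \cite{l-s}, \cite{ac}, so there is no internal argument to compare yours against. Judged on its own merits, your proof is correct. The reduction to the three facts (A) $d(k)\ge -1$, (B) $\{k:d(k)>0\}$ an initial segment, (C) $\{k:d(k)=-1\}$ a final segment does force the trapezoid; your proofs of (A) and (C) are the standard suffix and special-factor arguments and are sound; and your case analysis for (B) correctly isolates the only nontrivial configuration $s_{0}(k)=1$, $H(k)=1$. The Rauzy-graph argument for that crux also works: with every vertex of out-degree $1$ except $u_{0}$ (out-degree $2$) and the unrepeated suffix $v$ (out-degree $0$), and with the walk of $w$ forced to traverse every edge and to end at $v$, exactly one of the two forced paths leaving $u_{0}$ returns to it (always entering through the same edge $f$) while the other terminates at $v$; since traversals of in-edges of $u_{0}$ are precisely the entries to its visits, every in-edge other than $f$ is traversed at most once (as the first entry) and so cannot be right-special, while $f$ itself contributes at most $1$ to $H(k+1)$, giving $H(k+1)\le 1=s_{0}(k+1)$. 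Your exclusion of the degenerate alternatives (both forced paths returning, both reaching $v$, or a forced cycle avoiding $u_{0}$ and $v$) is justified for exactly the reasons you give: the walk must cover all edges and must end at $v$. What remains is only the write-up you defer (``a short analysis of this visit sequence shows...''); expanding it as above requires no new idea. In spirit your route is the same as in the cited literature --- in particular \cite{ac} and \cite{adl} argue through right extensions, the unique right-special factor and the unrepeated suffix --- so your text would serve as the self-contained proof that the paper itself chose not to include.
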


The \emph{palindrome complexity function} of a finite or infinite word $w$\ is
given by $\mathrm{pal}_{w}:\mathbb{N}\longrightarrow\mathbb{N}$, $\mathrm{pal}_{w}(k)=$
$\textrm{card}(\mathrm{PAL}(w)\cap A^{k}),$ $k\in\mathbb{N}$. Obviously,
\begin{equation}
\mathrm{pal}_{w}(k)\leq p_{w}(k),\ k\in\mathbb{N},\label{ac1}%
\end{equation}
and for finite words of length $\left\vert w\right\vert =n$,%
\begin{equation}
\mathrm{pal}_{w}(k)\leq\min\left\{  q^{\left\lceil k/2\right\rceil },n-k+1\right\}
,\ k\in\{0,...,n\}.\label{ac2}%
\end{equation}
The palindrome $u\in \mathrm{PAL}(w)$\ \emph{has the palindrome valence} $j$ if there
are $j$\ and only $j$\ distinct letters $x_{i}$\ such that $x_{i}ux_{i}\in
\mathrm{PAL}(w)$, $1\leq i\leq j$. We denote by
\begin{equation}
s_{p}(j,k)=\textrm{card}\left\{  u\in(\mathrm{PAL}(w)\cap A^{k}):u\text{ has the
palindrome valence }j\right\}  ,\label{ac3}%
\end{equation}
and by $s_{p}(0,k)$ the cardinal of the set of subwords $v\in \mathrm{PAL}(w)\cap A^{k}
$\ (not necessarily suffixes or prefixes of $w$) which cannot be continued as
$xvx\in \mathrm{PAL}(w)$,\ $x\in A$.

The palindrome complexity function of finite or infinite words satisfies the
iterative equation
\begin{equation}
\mathrm{pal}_{w}(k+2)=\mathrm{pal}_{w}(k)+\sum\limits_{j=0}^{q}(j-1)s_{p}(j,k).\label{ac4}%
\end{equation}
Due to the fact that the number of even palindromes is not directly related to
that of odd ones, we do not expect that $\mathrm{pal}_{w}$\ is of trapezoidal shape, as
it was the case for the subword complexity function $p_{w}$.

For this reason we define the \emph{odd}, respectively\emph{\ even palindrome
complexity function} as the restrictions of $\mathrm{pal}_{w}\ $to odd, respectively
even integers: $\mathrm{pal}_{w}^{o}:2\mathbb{N}+1\rightarrow\mathbb{N},$ $\mathrm{pal}_{w}%
^{o}(k)=\mathrm{pal}_{w}(k);\ \mathrm{pal}_{w}^{e}:2\mathbb{N}\rightarrow\mathbb{N},$
$\mathrm{pal}_{w}^{e}(k)=\mathrm{pal}_{w}(k)$.

These functions have a trapezoidal form for short words; nevertheless, this is
not true in general, as the following examples show.

\begin{example}
\textrm{\label{ex1}The word $w_1=1010^{5}1^{2}0^{7}10$ with $\left\vert
w_{1}\right\vert =19$ has $\mathrm{pal}_{w_{1}}^{o}(1)=2$, $\mathrm{pal}_{w_{1}}^{o}(3)=3$,
$\mathrm{pal}_{w_{1}}^{o}(5)=1$, $\mathrm{pal}_{w_{1}}^{o}(7)=2$, $\mathrm{pal}_{w_{1}}^{o}(9)=1$. } (see Fig. 1.)
\end{example}

\begin{example}
\textrm{\label{ex2}The word $w_{2}=1^{4}0^{6}10^{8}1^{2}0$ with $\left\vert
w_{2}\right\vert =22$ has $\mathrm{pal}_{w_{2}}^{e}(2)=2$, $\mathrm{pal}_{w_{2}}^{e}(4)=3$,
$\mathrm{pal}_{w_{2}}^{e}(6)=1$, $\mathrm{pal}_{w_{2}}^{e}(8)=2$, $\mathrm{pal}_{w_{2}}^{e}(10)=1$. } (see Fig. 1.)
\end{example}

\begin{figure}
\includegraphics[width=\textwidth]{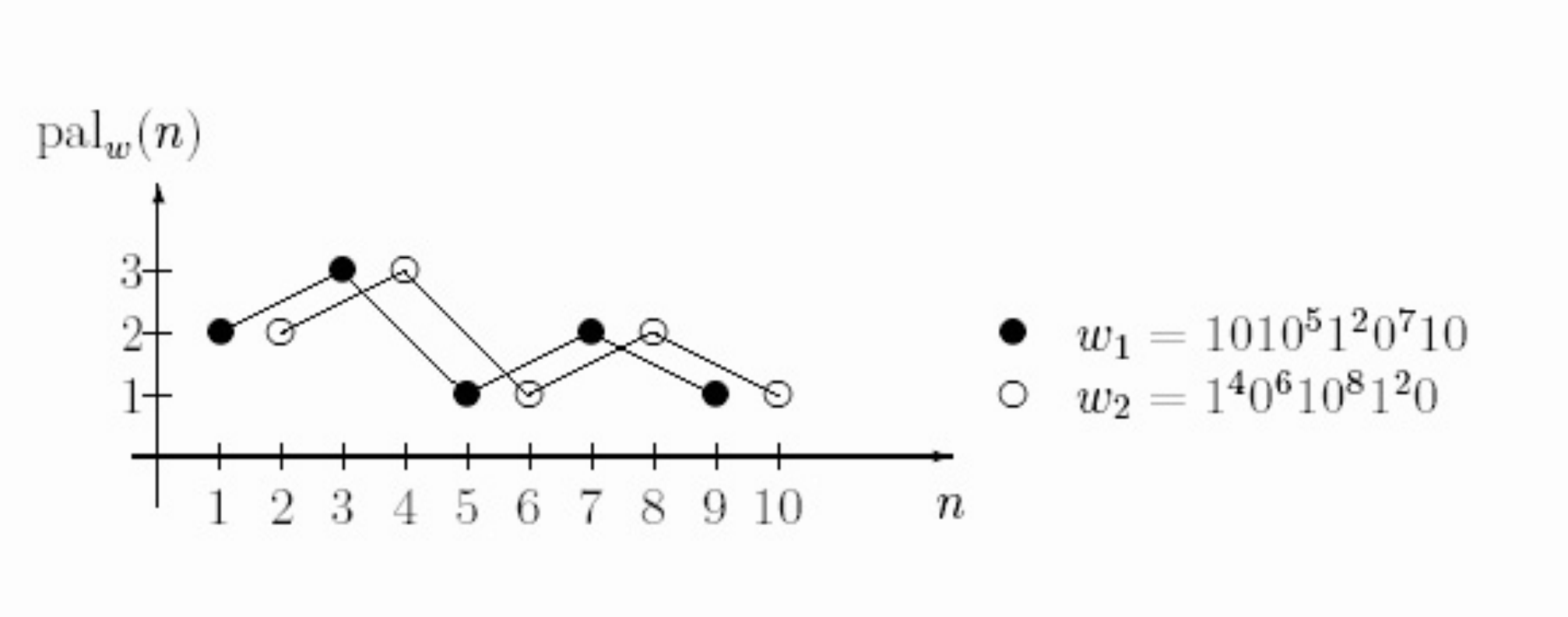}
\caption{Odd and even palindrome complexity function}
\end{figure}

\begin{remark}
The palindrome complexity for infinite words is not nondecreasing, as the
usual complexity function is. Indeed, we can continue the word in Example
\ref{ex1} with $11001100$..., and its odd palindrome complexity function will
be as that for $w_{1},$\ and then equal to $0$ for $k\geq11.$ Similarly, we
can continue $w_{2}$\ in Example \ref{ex2} with $1010$... to obtain an
infinite word with the even palindrome complexity of $w_{2}$\ till $k=10$ and
equal to $0$ for $k\geq12.$
\end{remark}

\section{Average number of palindromes}

We consider an alphabet $A$ with $q\geq2$\ letters.

\begin{definition}
We define the \textbf{total palindrome complexity} $P$\ by
\begin{equation}
P(w)=%
%TCIMACRO{\dsum _{n=1}^{\left\vert w\right\vert }}%
%BeginExpansion
{\displaystyle\sum_{n=1}^{\left\vert w\right\vert }}
%EndExpansion
\mathrm{pal}_{w}(n),\label{p1}%
\end{equation}
where $w$\ is a word of length $\left\vert w\right\vert $,\ and $\mathrm{pal}_{w}%
(n)$\ denotes the number of distinct palindromes of length $n$\ which are
nonempty subwords of $w$. 
\end{definition}

Because he set of the nonempty palindromes in $w$\ is denoted
by $\mathrm{PAL}(w)$, we can write also $P(w)=\textrm{card}(\mathrm{PAL}(w))$.

\begin{definition}
The \textbf{average number of palindromes} $M_{q}(n)$\ contained in all 
words of length $n$\ is defined by
\begin{equation}
M_{q}(n)=\frac{%
%TCIMACRO{\dsum \limits_{w\in A^{n}}}%
%BeginExpansion
{\displaystyle\sum\limits_{w\in A^{n}}}
%EndExpansion
P(w)}{q^{n}}.\label{p2}%
\end{equation}

\end{definition}

We can give the following upper estimate for $M_{q}(n)$.

\begin{theorem}
\label{tp}For $n\in\mathbb{N}$,\ the average number of palindromes contained
in the words of length $n$\ satisfies the inequalities
\begin{equation}%
\begin{array}
[c]{l}%
M_{q}(n)\leq\dfrac{q^{-\left(  n-1\right)  /2}(q+3)+2n(q-1)+q^{3}-2q^{2}%
-2q-1}{(q-1)^{2}},\text{ for }n\text{\ odd,}\\
M_{q}(n)\leq\dfrac{q^{-n/2}(3q+1)+2n(q-1)+q^{3}-2q^{2}-2q-1}{(q-1)^{2}}%
,\quad\;\text{for }n\text{\ even.}%
\end{array}
\label{p4}%
\end{equation}

\end{theorem}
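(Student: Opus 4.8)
The plan is to exploit the identity $M_q(n)=q^{-n}\sum_{w\in A^n}P(w)$ together with $P(w)=\sum_{k=1}^{n}\mathrm{pal}_{w}(k)$ in order to reduce the estimate to a per-length calculation. Setting $A(k):=q^{-n}\sum_{w\in A^n}\mathrm{pal}_{w}(k)$ (the average number of \emph{distinct} palindromes of length $k$), linearity gives $M_q(n)=\sum_{k=1}^{n}A(k)$, so it suffices to bound each $A(k)$ and then sum.

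First I would isolate the case $k=1$: here $\mathrm{pal}_{w}(1)$ is just the number of distinct letters occurring in $w$, so $\mathrm{pal}_{w}(1)\le q$ for every $w$ and hence $A(1)\le q$. This single improvement is exactly what lowers the leading coefficient from $\frac{q+1}{q-1}$, which the occurrence bound below would give if applied at $k=1$ as well, down to the claimed $\frac{2}{q-1}$; getting this split right is essential to matching the stated constants. For $k\ge 2$ I would bound distinct palindromes by palindromic occurrences: if $\mathrm{occ}_{w}(k)$ counts the starting positions $i\in\{1,\dots,n-k+1\}$ for which $w_i\cdots w_{i+k-1}$ is a palindrome, then $\mathrm{pal}_{w}(k)\le\mathrm{occ}_{w}(k)$. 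Summing over all words and swapping the order of summation, $\sum_{w}\mathrm{occ}_{w}(k)=\sum_{i=1}^{n-k+1}\textrm{card}\{w:w_i\cdots w_{i+k-1}\in\mathrm{PAL}(A)\}$; a palindrome in a fixed window is determined by its first $\lceil k/2\rceil$ letters while the $n-k$ positions outside the window are free, so each summand equals $q^{\lceil k/2\rceil}q^{\,n-k}=q^{\,n-\lfloor k/2\rfloor}$. Dividing by $q^{n}$ yields the key inequality $A(k)\le(n-k+1)\,q^{-\lfloor k/2\rfloor}$.

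Combining the two estimates gives $M_q(n)\le q+\sum_{k=2}^{n}(n-k+1)q^{-\lfloor k/2\rfloor}$, and what remains is to put this sum in closed form. I would split the index according to parity, writing the even terms $k=2j$ as $(n-2j+1)q^{-j}$ and the odd terms $k=2j+1$ as $(n-2j)q^{-j}$, which reduces everything to the two standard sums $\sum_{j=1}^{m}q^{-j}=\frac{1-q^{-m}}{q-1}$ and $\sum_{j=1}^{m}jq^{-j}$, the upper limit being $m=\lfloor n/2\rfloor$. The two parities of $n$ then produce the two displayed formulas, with the remainder $q^{-m}$ turning into $q^{-n/2}$ for $n$ even and $q^{-(n-1)/2}$ for $n$ odd.

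The only genuinely laborious step, and the one most likely to hide an arithmetic slip, is this last bookkeeping: collecting the coefficients so that the constant part condenses to $\frac{q^{3}-2q^{2}-2q-1}{(q-1)^{2}}$ while the decaying part condenses to $\frac{q+3}{(q-1)^{2}}\,q^{-(n-1)/2}$ in the odd case and $\frac{3q+1}{(q-1)^{2}}\,q^{-n/2}$ in the even case. I would guard against errors by checking the identity on small cases, e.g. $(q,n)=(2,4)$, where the right-hand side equals $19/4$. Finally I would remark that the bound is deliberately crude: replacing $A(k)$ throughout by $\min\{q^{\lceil k/2\rceil},(n-k+1)q^{-\lfloor k/2\rfloor}\}$ and splitting the sum at $k\approx\log_q n$ would in fact yield a bound of order $\sqrt{n}$, but the linear estimate stated here is what is asserted and is all that the above argument requires.
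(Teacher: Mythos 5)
Your proposal is correct and follows essentially the same route as the paper: the paper likewise bounds the $k=1$ contribution by $q$, bounds distinct palindromes of length $k\ge 2$ by occurrences (counting, for each fixed palindrome of length $k$, the at most $(n-k+1)q^{n-k}$ word--position pairs, which is exactly your per-position window count $q^{\lceil k/2\rceil}q^{n-k}$ summed in the other order), and then splits the resulting bound $q+\sum_{k=2}^{n}(n-k+1)q^{-\lfloor k/2\rfloor}$ by parity of $k$ and evaluates it with the same two standard sums. The only difference is whether the triple count over (word, position, palindrome) is organized palindrome-first or position-first, which is immaterial.
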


\begin{proof}
We have
\[%
\begin{array}
[c]{ll}%
%TCIMACRO{\dsum \limits_{w\in A^{n}}}%
%BeginExpansion
{\displaystyle\sum\limits_{w\in A^{n}}}
%EndExpansion
P(w) & =%
%TCIMACRO{\dsum \limits_{w\in A^{n}}}%
%BeginExpansion
{\displaystyle\sum\limits_{w\in A^{n}}}
%EndExpansion
\
%TCIMACRO{\dsum \limits_{\pi\in \mathrm{PAL}(w)}}%
%BeginExpansion
{\displaystyle\sum\limits_{\pi\in \mathrm{PAL}(w)}}
%EndExpansion
1=%
%TCIMACRO{\dsum \limits_{w\in A^{n}}}%
%BeginExpansion
{\displaystyle\sum\limits_{w\in A^{n}}}
%EndExpansion
\
%TCIMACRO{\dsum \limits_{k=1}^{n}}%
%BeginExpansion
{\displaystyle\sum\limits_{k=1}^{n}}
%EndExpansion
\
%TCIMACRO{\dsum \limits_{\pi\in \mathrm{PAL}(w)\cap A^{k}}}%
%BeginExpansion
{\displaystyle\sum\limits_{\pi\in \mathrm{PAL}(w)\cap A^{k}}}
%EndExpansion
1\\
& \leq%
%TCIMACRO{\dsum \limits_{w\in A^{n}}}%
%BeginExpansion
{\displaystyle\sum\limits_{w\in A^{n}}}
%EndExpansion
\
%TCIMACRO{\dsum \limits_{\pi\in \mathrm{PAL}(w)\cap A^{1}}}%
%BeginExpansion
{\displaystyle\sum\limits_{\pi\in \mathrm{PAL}(w)\cap A^{1}}}
%EndExpansion
1+%
%TCIMACRO{\dsum \limits_{k=2}^{n}}%
%BeginExpansion
{\displaystyle\sum\limits_{k=2}^{n}}
%EndExpansion
\
%TCIMACRO{\dsum \limits_{\pi\in \mathrm{PAL}_{k}(A)}}%
%BeginExpansion
{\displaystyle\sum\limits_{\pi\in \mathrm{PAL}_{k}(A)}}
%EndExpansion
\
%TCIMACRO{\dsum \limits_{\substack{w\in A^{n} \\\pi\in \mathrm{PAL}(w)\cap A^{k}}}}%
%BeginExpansion
{\displaystyle\sum\limits_{\substack{w\in A^{n} \\\pi\in \mathrm{PAL}(w)\cap A^{k}}}}
%EndExpansion
1,
\end{array}
\]
and
\begin{equation}%
%TCIMACRO{\dsum \limits_{w\in A^{n}}}%
%BeginExpansion
{\displaystyle\sum\limits_{w\in A^{n}}}
%EndExpansion
\
%TCIMACRO{\dsum \limits_{\pi\in \mathrm{PAL}(w)\cap A^{1}}}%
%BeginExpansion
{\displaystyle\sum\limits_{\pi\in \mathrm{PAL}(w)\cap A^{1}}}
%EndExpansion
1\leq qq^{n}=q^{n+1}.\label{pp1}%
\end{equation}
For a fixed palindrome $\pi$, with $\left\vert \pi\right\vert =k$, the number
of the words of length $n$\ in which it appears as a subword at position
$i$\ ($1\leq i\leq n-k+1$) is $q^{n-k}$. But the position $i$\ is arbitrary,
so that there are at most $(n-k+1)q^{n-k}$\ words in which $\pi$\ is a
subword, these words being not necessarily distinct. It follows that
\[%
%TCIMACRO{\dsum \limits_{w\in A^{n}}}%
%BeginExpansion
{\displaystyle\sum\limits_{w\in A^{n}}}
%EndExpansion
P(w)\leq q^{n+1}+%
%TCIMACRO{\dsum \limits_{k=2}^{n}}%
%BeginExpansion
{\displaystyle\sum\limits_{k=2}^{n}}
%EndExpansion
\
%TCIMACRO{\dsum \limits_{\pi\in \mathrm{PAL}_{k}(A)}}%
%BeginExpansion
{\displaystyle\sum\limits_{\pi\in \mathrm{PAL}_{k}(A)}}
%EndExpansion
\ (n-k+1)q^{n-k}.
\]
The number of the palindromes of length $k$\ is $q^{\left\lceil
k/2\right\rceil }$, therefore
\[%
%TCIMACRO{\dsum \limits_{w\in A^{n}}}%
%BeginExpansion
{\displaystyle\sum\limits_{w\in A^{n}}}
%EndExpansion
P(w)\leq q^{n+1}+%
%TCIMACRO{\dsum \limits_{k=2}^{n}}%
%BeginExpansion
{\displaystyle\sum\limits_{k=2}^{n}}
%EndExpansion
(n-k+1)q^{n-k+\left\lceil k/2\right\rceil }%
\]
and
\[
M_{q}(n)\leq q+%
%TCIMACRO{\dsum \limits_{k=2}^{n}}%
%BeginExpansion
{\displaystyle\sum\limits_{k=2}^{n}}
%EndExpansion
(n-k+1)q^{-k+\left\lceil k/2\right\rceil }.
\]
We split the sum according to $k=2j,$\ $j=1,...,\left\lfloor n/2\right\rfloor
,$\ respectively $k=2j+1,$\ $j=1,...,\left\lfloor (n-1)/2\right\rfloor $, and
obtain
\[
M_{q}(n)\leq q+%
%TCIMACRO{\dsum \limits_{j=1}^{\left\lfloor n/2\right\rfloor }}%
%BeginExpansion
{\displaystyle\sum\limits_{j=1}^{\left\lfloor n/2\right\rfloor }}
%EndExpansion
(n-2j+1)q^{-j}+%
%TCIMACRO{\dsum \limits_{j=1}^{\left\lfloor (n-1)/2\right\rfloor }}%
%BeginExpansion
{\displaystyle\sum\limits_{j=1}^{\left\lfloor (n-1)/2\right\rfloor }}
%EndExpansion
(n-2j)q^{-j}.
\]
Making use of $%
%TCIMACRO{\dsum \limits_{j=1}^{s}}%
%BeginExpansion
{\displaystyle\sum\limits_{j=1}^{s}}
%EndExpansion
q^{-j}=(1-q^{-s})/(q-1)$\ and $%
%TCIMACRO{\dsum \limits_{j=1}^{s}}%
%BeginExpansion
{\displaystyle\sum\limits_{j=1}^{s}}
%EndExpansion
jq^{-j}=(q-q^{1-s}(s+1)+sq^{-s})/(q-1)^{2}$, it follows that $M_{q}%
(n)$\ satisfies the inequalities in (\ref{p4}).
\end{proof}

\begin{corollary}
\label{c1}The following inequality holds
\begin{equation}
\limsup_{n\rightarrow\infty}\frac{M_{q}(n)}{n}\leq\frac{2}{q-1}.\label{p5}%
\end{equation}

\end{corollary}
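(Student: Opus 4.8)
The plan is to read this off directly from the two inequalities in Theorem~\ref{tp} by dividing through by $n$ and letting $n\to\infty$. Since a $\limsup$ over all $n$ is dominated once both the even and the odd subsequences are controlled by the same limit, I would handle the two parity cases in parallel and check that each yields $2/(q-1)$.

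First I would rewrite the odd-$n$ bound from Theorem~\ref{tp}, divided by $n$, as
\[
\frac{M_q(n)}{n}\leq\frac{q^{-(n-1)/2}(q+3)}{n(q-1)^2}+\frac{2(q-1)}{(q-1)^2}+\frac{q^3-2q^2-2q-1}{n(q-1)^2},
\]
and analogously for even $n$ with $q^{-n/2}(3q+1)$ replacing the first numerator. The middle term is the constant $2(q-1)/(q-1)^2=2/(q-1)$, which is exactly the target value, so the whole argument reduces to showing the two remaining terms vanish.

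Next I would estimate those two terms. Because $q\geq 2$ throughout this section, the factor $q^{-(n-1)/2}$ (respectively $q^{-n/2}$) decays exponentially in $n$, so dividing by the extra $n$ and letting $n\to\infty$ sends the first term to $0$. The last term has a numerator depending only on $q$, hence dividing by $n$ sends it to $0$ as well. Consequently, along both the odd and the even values of $n$, the quantity $M_q(n)/n$ is bounded above by an expression converging to $2/(q-1)$, and taking $\limsup$ over all $n$ gives the claimed inequality~(\ref{p5}).

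I do not expect any genuine obstacle: this is a routine asymptotic reading of the two explicit bounds in Theorem~\ref{tp}. The only point worth stating explicitly is that the exponential factor truly vanishes after division by $n$, which is precisely where the standing hypothesis $q\geq 2$ is used.
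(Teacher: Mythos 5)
Your proposal is correct and follows essentially the same route as the paper: the paper likewise splits $\limsup_n M_q(n)/n$ into the odd and even subsequences, substitutes the two bounds of Theorem~\ref{tp}, and observes that after dividing by $n$ the exponential and constant terms vanish while the linear term yields $2/(q-1)$. Nothing is missing from your argument.
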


\begin{proof}%
\[%
\begin{array}
[c]{l}%
\limsup\limits_{n\rightarrow\infty}\dfrac{M_{q}(n)}{n}=\max\left\{
\limsup\limits_{n\rightarrow\infty}\dfrac{M_{q}(2n+1)}{2n+1},\ \limsup
\limits_{n\rightarrow\infty}\dfrac{M_{q}(2n)}{2n}\right\} \\
\leq\max\left\{  \lim\limits_{n\rightarrow\infty}\left(  \dfrac{q^{-n}%
(q+3)+2\left(  2n+1\right)  (q-1)+q^{3}-2q^{2}-2q-1}{(q-1)^{2}}\right)
\dfrac{1}{2n+1},\right. \\
\left.  \lim\limits_{n\rightarrow\infty}\left(  \dfrac{q^{-n}%
(3q+1)+4n(q-1)+q^{3}-2q^{2}-2q-1}{(q-1)^{2}}\right)  \dfrac{1}{2n}\right\}
=\dfrac{2}{q-1}.
\end{array}
\]

\end{proof}

We are interested in finding how large is the average number of palindromes
contained in the words of length $n$\ compared to the length $n$. The
numerical estimations done for small values of $n$ show that $M_{q}(n)$\ is
comparable to $n$,\ but Corollary \ref{c1} allows us to show that for $q\geq
4$\ this does not hold.

\begin{corollary}
\label{c2}For an alphabet with $q\geq4$ letters,
\begin{equation}
\limsup_{n\rightarrow\infty}\frac{M_{q}(n)}{n}<1\text{.}\label{p7}%
\end{equation}

\end{corollary}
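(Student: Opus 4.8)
The plan is to derive this directly from Corollary \ref{c1}, which already supplies the essential quantitative bound $\limsup_{n\rightarrow\infty} M_q(n)/n \leq 2/(q-1)$. Once that inequality is in hand, the statement collapses to a trivial arithmetic comparison: there is no genuine obstacle to overcome, since all the analytic work—estimating $\sum_{w} P(w)$, splitting the sum by parity of $k$, and summing the two geometric/arithmetico-geometric series—was carried out in the proof of Theorem \ref{tp} and distilled into the limit in Corollary \ref{c1}. The content here is simply to recognize which values of $q$ push that constant below $1$.

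First I would invoke Corollary \ref{c1} to write $\limsup_{n\rightarrow\infty} M_q(n)/n \leq 2/(q-1)$. Next I would note that $2/(q-1) < 1$ is equivalent to $q-1 > 2$, i.e. $q > 3$; since $q$ is an integer, this holds exactly when $q \geq 4$. Combining these, for any alphabet with $q \geq 4$ letters I obtain $\limsup_{n\rightarrow\infty} M_q(n)/n \leq 2/(q-1) \leq 2/3 < 1$, which is the claim. A slightly cleaner phrasing avoids introducing $2/3$ and just observes that $q \geq 4 \Rightarrow q-1 \geq 3 > 2$, giving $2/(q-1) < 1$ directly.

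The one point worth flagging, rather than an obstacle, is that the threshold $q \geq 4$ is \emph{sharp for this method}: at $q = 3$ the bound of Corollary \ref{c1} yields only $2/(q-1) = 1$, which fails to be strictly below $1$, so the argument is silent for $q \leq 3$. This matches the discussion preceding the statement, where it is observed that numerically $M_q(n)$ appears comparable to $n$ for small alphabets; deciding the behaviour for $q = 2, 3$ would require a genuinely different (and presumably sharper) estimate than the one feeding into Corollary \ref{c1}, so I would not attempt to strengthen the present argument to cover those cases.
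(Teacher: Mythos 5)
Your proof is correct and matches the paper's intended argument exactly: the paper states this corollary without a separate proof, precisely because (as the preceding text indicates) it follows immediately from Corollary \ref{c1} via the observation that $2/(q-1)<1$ when $q\geq 4$. Your remark that the method is silent for $q\leq 3$ also agrees with the paper, which later handles $q=3$ only by means of the sharper bound in Corollary \ref{c3}.
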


In the proof of Theorem \ref{tp} we have used the rough inequality
(\ref{pp1}), which was sufficient to prove the result. In fact, it is not
difficult to calculate exactly
\begin{equation}
S_{n,p}=%
%TCIMACRO{\dsum \limits_{w\in A^{n}}}%
%BeginExpansion
{\displaystyle\sum\limits_{w\in A^{n}}}
%EndExpansion
\
%TCIMACRO{\dsum \limits_{\pi\in \mathrm{PAL}(w)\cap A^{p}}}%
%BeginExpansion
{\displaystyle\sum\limits_{\pi\in \mathrm{PAL}(w)\cap A^{p}}}
%EndExpansion
1\text{ for }p=1,2.\label{pp}%
\end{equation}
This result has intrinsic importance.

\begin{theorem}
\label{prop1}The number of occurrences of the palindromes of length $1$,
respectively $2$, in all  words of length $n$\ (counted once if a
palindrome appears in a word, and once again if it appears in another one) is
given by
\begin{equation}
S_{n,1}=q^{n+1}-q\left(  q-1\right)  ^{n},\label{p8}%
\end{equation}
respectively by
\begin{equation}%
\begin{array}
[c]{l}%
S_{n,2}=q^{n+1}-\dfrac{q}{(q-1)\sqrt{q^{2}+q-3}}\left(  \left(  \dfrac
{q-1+\sqrt{q^{2}+q-3}}{2}\right)  ^{n+2}\right. \\
\left.  -\left(  \dfrac{q-1-\sqrt{q^{2}+q-3}}{2}\right)  ^{n+2}\right)  .
\end{array}
\label{p9}%
\end{equation}

\end{theorem}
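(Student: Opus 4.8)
The plan is to evaluate both double sums defining $S_{n,p}$ in (\ref{pp}) by exchanging the order of summation and reducing each to a count of length-$n$ words that contain a single prescribed palindrome. The simplification that makes $p=1,2$ tractable is that there are very few short palindromes: the palindromes of length $1$ are exactly the $q$ letters, while the palindromes of length $2$ are exactly the words $aa$ with $a\in A$, since $\widetilde{xy}=yx$ equals $xy$ only when $x=y$. In each case $S_{n,p}$ therefore becomes a sum over these $q$ palindromes of the number of words $w\in A^{n}$ in which the fixed palindrome occurs as a subword, and by the symmetry of the alphabet that number is independent of the chosen palindrome.

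For $S_{n,1}$ the reduction gives $S_{n,1}=\sum_{a\in A}\textrm{card}\{w\in A^{n}: a\text{ occurs in }w\}$. Passing to the complement, the number of words of length $n$ avoiding a fixed letter is $(q-1)^{n}$, so the number containing it is $q^{n}-(q-1)^{n}$; summing over the $q$ letters yields $S_{n,1}=q\bigl(q^{n}-(q-1)^{n}\bigr)$, which is (\ref{p8}). This part is immediate and requires no recurrence.

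For $S_{n,2}$ the same exchange gives $S_{n,2}=q\,M_{n}$, where $M_{n}$ is the number of words of length $n$ that contain a fixed factor $aa$. I would again pass to the complement $N_{n}=q^{n}-M_{n}$, the number of length-$n$ words with no two consecutive copies of $a$. Splitting such words according to their last letter produces the recurrence: a word ending in one of the $q-1$ letters $\neq a$ contributes $(q-1)N_{n-1}$, while a word ending in $a$ forces its penultimate letter to be one of the $q-1$ non-$a$ letters and thereby contributes $(q-1)N_{n-2}$. This gives $N_{n}=(q-1)\bigl(N_{n-1}+N_{n-2}\bigr)$ with the initial data $N_{0}=1$ and $N_{1}=q$.

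It then remains to solve this linear recurrence. Its characteristic equation $x^{2}-(q-1)x-(q-1)=0$ has roots $\alpha,\beta=\tfrac12\bigl((q-1)\pm\sqrt{(q-1)(q+3)}\bigr)$, with $\alpha+\beta=q-1$, $\alpha\beta=-(q-1)$ and $\alpha-\beta=\sqrt{(q-1)(q+3)}$. I would verify that the single shifted form $N_{n}=\bigl(\alpha^{n+2}-\beta^{n+2}\bigr)/\bigl((q-1)(\alpha-\beta)\bigr)$ is the solution: being a combination of $\alpha^{n},\beta^{n}$ it satisfies the recurrence automatically, and the two initial conditions reduce to the symmetric-function identities $\alpha+\beta=q-1$ (for $N_{0}=1$) and $\alpha^{2}+\alpha\beta+\beta^{2}=q(q-1)$ (for $N_{1}=q$), both of which follow from the values above. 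Substituting into $S_{n,2}=q^{n+1}-qN_{n}$ then delivers the closed form (\ref{p9}), with the discriminant appearing as $\sqrt{(q-1)(q+3)}$. The main obstacle is purely computational and lies entirely in this last step: establishing the recurrence with the correct initial data and then checking that the two-parameter general solution collapses into the clean shifted expression with exponent $n+2$, after which the simplification of the prefactor $q/\bigl((q-1)(\alpha-\beta)\bigr)$ is routine.
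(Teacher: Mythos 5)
Your proof is correct and follows essentially the same route as the paper: reduce by alphabet symmetry to counting words containing a fixed letter (resp.\ a fixed square $aa$), pass to the complementary count, derive the recurrence $N_{n}=(q-1)\left(N_{n-1}+N_{n-2}\right)$ by conditioning on the last letter, and solve it via the characteristic equation; the paper merely starts its recurrence at $\psi(2)=q^{2}-1$, $\psi(3)=q^{3}-2q+1$ instead of your (consistent) $N_{0}=1$, $N_{1}=q$. One point deserves attention: your discriminant $\sqrt{(q-1)(q+3)}=\sqrt{q^{2}+2q-3}$ is the correct one, whereas the theorem as printed (and the paper's own proof) uses $\sqrt{q^{2}+q-3}$; this appears to be a typo in the paper, as one checks for $q=2$, $n=2$, where the printed formula yields $S_{2,2}=4$ while the true count (and your formula, which reduces to the Fibonacci--Binet form for $q=2$) gives $S_{2,2}=2$. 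So your argument in fact proves the corrected statement rather than the statement exactly as printed, which is the right outcome.
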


\begin{proof}
We use Iverson's convention \cite{gkp}%
\[
\left[  \alpha\right]  =\left\{
\begin{array}
[c]{l}%
1,\text{ if }\alpha\text{\ is true}\\
0,\text{ if }\alpha\text{\ is false}%
\end{array}
\right.
\]
and obtain
\[
S_{n,1}=%
%TCIMACRO{\dsum \limits_{w\in A^{n}}}%
%BeginExpansion
{\displaystyle\sum\limits_{w\in A^{n}}}
%EndExpansion
\
%TCIMACRO{\dsum \limits_{a\in A}}%
%BeginExpansion
{\displaystyle\sum\limits_{a\in A}}
%EndExpansion
\left[  a\text{ in }w\right]  =q%
%TCIMACRO{\dsum \limits_{w\in A^{n}}}%
%BeginExpansion
{\displaystyle\sum\limits_{w\in A^{n}}}
%EndExpansion
\left[  a_{1}\text{ in }w\right]  ,
\]
where $a_{1}$\ is a fixed letter of the alphabet $A$. Then
\[
S_{n,1}=q%
%TCIMACRO{\dsum \limits_{w\in A^{n}}}%
%BeginExpansion
{\displaystyle\sum\limits_{w\in A^{n}}}
%EndExpansion
\left[  a_{1}\text{ in }w\right]  =q\left(  q^{n}-%
%TCIMACRO{\dsum \limits_{w\in A^{n}}}%
%BeginExpansion
{\displaystyle\sum\limits_{w\in A^{n}}}
%EndExpansion
\left[  a_{1}\text{ not in }w\right]  \right)  =q^{n+1}-q\left(  q-1\right)
^{n}.
\]

We proceed similarly to calculate $S_{n,2}=%
%TCIMACRO{\dsum \limits_{w\in A^{n}}}%
%BeginExpansion
{\displaystyle\sum\limits_{w\in A^{n}}}
%EndExpansion
\
%TCIMACRO{\dsum \limits_{\pi\in \mathrm{PAL}(w)\cap A^{2}}}%
%BeginExpansion
{\displaystyle\sum\limits_{\pi\in \mathrm{PAL}(w)\cap A^{2}}}
%EndExpansion
1$ and obtain
\[
S_{n,2}=%
%TCIMACRO{\dsum \limits_{w\in A^{n}}}%
%BeginExpansion
{\displaystyle\sum\limits_{w\in A^{n}}}
%EndExpansion
\
%TCIMACRO{\dsum \limits_{a\in A}}%
%BeginExpansion
{\displaystyle\sum\limits_{a\in A}}
%EndExpansion
\left[  aa\text{ in }w\right]  =q%
%TCIMACRO{\dsum \limits_{w\in A^{n}}}%
%BeginExpansion
{\displaystyle\sum\limits_{w\in A^{n}}}
%EndExpansion
\left[  a_{1}a_{1}\text{ in }w\right]  ,
\]
where $a_{1}$\ is again a fixed letter of the alphabet $A$. We denote
$\varphi(n):=%
%TCIMACRO{\dsum \limits_{w\in A^{n}}}%
%BeginExpansion
{\displaystyle\sum\limits_{w\in A^{n}}}
%EndExpansion
\left[  a_{1}a_{1}\text{ in }w\right]  $, for which $\varphi(2)=1$\ and
$\varphi(3)=2q-1$. It is easier to establish a recurrence formula for
$\psi(n)=q^{n}-\varphi(n)=%
%TCIMACRO{\dsum \limits_{w\in A^{n}}}%
%BeginExpansion
{\displaystyle\sum\limits_{w\in A^{n}}}
%EndExpansion
\left[  a_{1}a_{1}\text{ not in }w\right]  $. The number $\psi(n)$\ is
obtained from:

- the number $(q-1)\psi(n-1)$\ of words which do not end in $a_{1}$\ and have
not $a_{1}a_{1}$\ in their first $n-1$\ positions;

- the number $(q-1)\psi(n-2)$\ of words which end in $a_{1}$,\ have the $n-1
$\ position occupied by one of the other $q-1$\ letters and have not
$a_{1}a_{1}$\ in the first $n-2$\ positions.

It follows that $\psi$\ satisfies the recurrence formula
\begin{equation}
\psi(n)=(q-1)(\psi(n-1)+\psi(n-2)),\label{p10}%
\end{equation}
with $\psi(2)=q^{2}-1$ and $\psi(3)=q^{3}-2q+1$. Its solution is
\[%
\begin{array}
[c]{c}%
\psi(n)=\dfrac{1}{(q-1)\sqrt{q^{2}+q-3}}\left(  \left(  \dfrac{q-1+\sqrt
{q^{2}+q-3}}{2}\right)  ^{n+2}\right. \\
\left.  -\left(  \dfrac{q-1-\sqrt{q^{2}+q-3}}{2}\right)  ^{n+2}\right)
\end{array}
\]
and (\ref{p9}) follows from the fact that
\begin{equation}
S_{n,2}=q\left(  q^{n}-\psi(n)\right)  .\label{p10b}%
\end{equation}

\end{proof}

The expression of $S_{n,2}$ from (\ref{p9}) allows us to improve Corollary
\ref{c1}.

\begin{corollary}
\label{c3}The following inequality holds
\begin{equation}
\limsup_{n\rightarrow\infty}\frac{M_{q}(n)}{n}\leq\frac{q+1}{q\left(
q-1\right)  }.\label{p11}%
\end{equation}

\end{corollary}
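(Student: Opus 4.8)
The plan is to retrace the proof of Theorem~\ref{tp} (and hence of Corollary~\ref{c1}), feeding in the \emph{exact} values of $S_{n,1}$ and $S_{n,2}$ supplied by Theorem~\ref{prop1} in place of the crude estimates used there. In that earlier argument $S_{n,1}$ was replaced by the bound $q^{n+1}$, and each $S_{n,k}$ with $k\ge 2$ by $(n-k+1)q^{\,n-k+\lceil k/2\rceil}$; the $k=2$ summand alone contributed $(n-1)q^{\,n-1}$, that is, $(n-1)/q$ after dividing by $q^{n}$, and it is exactly this linearly growing quantity that we want to remove.

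First I would decompose
$$M_q(n)=\frac{S_{n,1}+S_{n,2}}{q^{n}}+\frac{1}{q^{n}}\sum_{k=3}^{n}S_{n,k}$$
and estimate only the residual sum by the rough bound, obtaining
$$M_q(n)\leq\frac{S_{n,1}+S_{n,2}}{q^{n}}+\sum_{k=3}^{n}(n-k+1)\,q^{\,-k+\lceil k/2\rceil}.$$
The decisive point is that both $S_{n,1}$ and $S_{n,2}$ are asymptotic to $q^{n+1}$: from (\ref{p8}) we get $S_{n,1}/q^{n}=q-q(1-1/q)^{n}\to q$, while from (\ref{p9}), with $\alpha=\tfrac12\bigl(q-1+\sqrt{q^{2}+q-3}\bigr)$ the larger base occurring there, the correction to $q^{n+1}$ is of order $\alpha^{n+2}$, so that $S_{n,2}/q^{n}=q-O\!\bigl((\alpha/q)^{n}\bigr)\to q$. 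Here the crucial inequality is $\alpha<q$, which reduces to $\sqrt{q^{2}+q-3}<q+1$, i.e. to $q>-4$, hence holds for every $q\ge2$; combined with $|\beta|<\alpha$ for the smaller base $\beta$, this makes the correction decay geometrically. Consequently $(S_{n,1}+S_{n,2})/q^{n}$ remains bounded and contributes $0$ to $\limsup_n M_q(n)/n$.

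It then remains only to evaluate the limit of $\tfrac1n\sum_{k=3}^{n}(n-k+1)q^{-k+\lceil k/2\rceil}$. Splitting into even indices $k=2j$ (now $j\ge2$) and odd indices $k=2j+1$ (with $j\ge1$), exactly as in Theorem~\ref{tp}, each summand carries the factor $q^{-j}$; after division by $n$ and passage to the limit (justified by dominated convergence, since each ratio $(n-2j+1)/n$ or $(n-2j)/n$ lies in $[0,1]$ and tends to $1$, while $\sum_j q^{-j}$ converges) this yields
$$\limsup_{n\to\infty}\frac{M_q(n)}{n}\leq\sum_{j\ge2}q^{-j}+\sum_{j\ge1}q^{-j}=\frac{1}{q(q-1)}+\frac{1}{q-1}=\frac{q+1}{q(q-1)}.$$
I expect the only genuine obstacle beyond the earlier proof to be the verification that the exact $S_{n,2}$ grows like $q^{n+1}$ rather than like $n\,q^{n}$ --- i.e. that the correction is geometric, not linear in $n$ --- which is what licenses dropping the $k=2$ term; once $\alpha<q$ is checked this is immediate, and the rest is a routine geometric-series computation. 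As a consistency check, the new bound is smaller than that of Corollary~\ref{c1} by exactly the discarded $j=1$ even term, since $\tfrac{2}{q-1}-\tfrac{q+1}{q(q-1)}=\tfrac1q$.
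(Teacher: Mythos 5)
Your proposal is correct and follows essentially the same route as the paper: isolate the $k=1,2$ contributions via Theorem~\ref{prop1}, observe that $(S_{n,1}+S_{n,2})/q^{n}$ stays bounded (the paper makes the same root estimate $0<\bigl(q-1+\sqrt{q^{2}+q-3}\bigr)/2<q$ to conclude $\psi(n)/q^{n}\to 0$), and evaluate the remaining $k\ge 3$ geometric sums to get $\frac{q+1}{q(q-1)}$. The only cosmetic difference is that the paper keeps the crude bound $S_{n,1}\le q^{n+1}$ instead of the exact formula (\ref{p8}), which changes nothing since that term is swallowed by the division by $n$ anyway.
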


\begin{proof}
Taking into account the inequality%

\[
{\displaystyle\sum\limits_{w\in A^{n}}} \ {\displaystyle\sum\limits_{\pi\in
\mathrm{PAL}(w)\cap A^{1}}} 1\leq qq^{n}=q^{n+1},
\]
and (\ref{p10b}), we get
\[%
\begin{array}
[c]{ll}%
M_{q}(n) & \leq\dfrac{1}{q^{n}}\left(  S_{n,1}+S_{n,2}+%
%TCIMACRO{\dsum \limits_{k=3}^{n}}%
%BeginExpansion
{\displaystyle\sum\limits_{k=3}^{n}}
%EndExpansion
\
%TCIMACRO{\dsum \limits_{\pi\in \mathrm{PAL}(A^{*})\cap A^{k}}}%
%BeginExpansion
{\displaystyle\sum\limits_{\pi\in \mathrm{PAL}(A)\cap A^{k}}}
%EndExpansion
\ (n-k+1)q^{n-k}\right) \\
& \leq q\left(  2-\dfrac{\psi(n)}{q^{n}}\right)  +%
%TCIMACRO{\dsum \limits_{k=3}^{n}}%
%BeginExpansion
{\displaystyle\sum\limits_{k=3}^{n}}
%EndExpansion
(n-k+1)q^{-k+\left\lfloor (k+1)/2\right\rfloor }.
\end{array}
\]
But $0<\!\left(  q-1\!+\!\sqrt{q^{2}+q-3}\right)  /2<q$ and $-1\!<\!\left(
q-1\!-\!\sqrt{q^{2}+q-3}\right)  /2<0$ for $q\geq2$, hence $\lim
\limits_{n\rightarrow\infty}\psi(n)/q^{n}=0$. Then
\[%
\begin{array}
[c]{ll}%
\limsup\limits_{n\rightarrow\infty}\dfrac{M_{q}(n)}{n} & \leq\lim
\limits_{n\rightarrow\infty}\dfrac{1}{n}%
%TCIMACRO{\dsum \limits_{k=3}^{n}}%
%BeginExpansion
{\displaystyle\sum\limits_{k=3}^{n}}
%EndExpansion
(n-k+1)q^{-k+\left\lfloor (k+1)/2\right\rfloor }\leq%
%TCIMACRO{\dsum \limits_{k=3}^{\infty}}%
%BeginExpansion
{\displaystyle\sum\limits_{k=3}^{\infty}}
%EndExpansion
q^{-k+\left\lfloor (k+1)/2\right\rfloor }\\
& =%
%TCIMACRO{\dsum \limits_{i=1}^{\infty}}%
%BeginExpansion
{\displaystyle\sum\limits_{i=1}^{\infty}}
%EndExpansion
q^{-2i-1+i+1}+%
%TCIMACRO{\dsum \limits_{i=2}^{\infty}}%
%BeginExpansion
{\displaystyle\sum\limits_{i=2}^{\infty}}
%EndExpansion
q^{-2i+i}=-\dfrac{1}{q}+2%
%TCIMACRO{\dsum \limits_{i=1}^{\infty}}%
%BeginExpansion
{\displaystyle\sum\limits_{i=1}^{\infty}}
%EndExpansion
q^{-i}=\dfrac{q+1}{q\left(  q-1\right)  }.
\end{array}
\]

\end{proof}

\begin{corollary}
\label{c4}The inequality (\ref{p7}) holds for $q=3$\ too.
\end{corollary}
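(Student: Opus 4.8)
The plan is to read off the result directly from the sharper estimate already established in Corollary \ref{c3}. Recall that Corollary \ref{c2} settled the claim for $q\geq 4$, but it did so via the cruder bound of Corollary \ref{c1}, namely $\limsup_{n\to\infty} M_q(n)/n \leq 2/(q-1)$. The obstruction for $q=3$ is precisely that this bound degenerates at the boundary: substituting $q=3$ gives $2/(q-1)=1$, which is not strictly less than $1$, so Corollary \ref{c1} alone cannot close the case. This is exactly the gap that motivated the exact computation of $S_{n,2}$ and the resulting refinement.

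First I would invoke Corollary \ref{c3}, whose conclusion (\ref{p11}) reads
\[
\limsup_{n\rightarrow\infty}\frac{M_{q}(n)}{n}\leq\frac{q+1}{q\left(q-1\right)}.
\]
Then I would simply specialize to $q=3$ and evaluate the right-hand side:
\[
\frac{q+1}{q(q-1)}\Bigg|_{q=3}=\frac{4}{3\cdot 2}=\frac{2}{3}.
\]
Since $2/3 < 1$, it follows that $\limsup_{n\to\infty} M_3(n)/n \leq 2/3 < 1$, which is exactly inequality (\ref{p7}) for $q=3$. Together with Corollary \ref{c2}, this establishes (\ref{p7}) for all $q\geq 3$.

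There is no real obstacle to overcome at this stage: the entire content lies in the improved constant of Corollary \ref{c3}, obtained from the exact formula (\ref{p9}) for $S_{n,2}$ and the fact that $\psi(n)/q^{n}\to 0$. Once that refinement is in hand, the present corollary is an immediate numerical substitution. The only point worth emphasizing is conceptual rather than computational, namely that passing from the length-$1$ contribution bound to the exact length-$2$ count is precisely what pushes the limiting constant below $1$ at the critical value $q=3$, where the weaker estimate was tight.
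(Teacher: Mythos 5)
Your proof is correct and takes exactly the route the paper intends: the paper states Corollary \ref{c4} as an immediate consequence of the refined bound (\ref{p11}) in Corollary \ref{c3}, which at $q=3$ gives $\frac{q+1}{q(q-1)}=\frac{2}{3}<1$. Your additional remarks about why Corollary \ref{c1} fails at $q=3$ (since $2/(q-1)=1$ there) match the paper's motivation for computing $S_{n,2}$ exactly.
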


It seems that (\ref{p7}) holds also for\ $q=2.$ Using a computer program we
obtained some values for the terms of the sequence $M^{\ast}(n)=M_{2}(n)/n$,
$n\geq2$. The first values are: $M^{\ast}(n)=1,$ $n=2,\ldots,7$; $M^{\ast
}(8)=0.99750$; $M^{\ast}(9)=0.98550$, which were close to $1$. We tried for
greater values of $n$\ and get
\[%
\begin{array}
[c]{lll}%
M^{\ast}(20)=0.89975, & M^{\ast}(21)=0.89002, & M^{\ast}(22)=0.88043\\
M^{\ast}(23)=0.87101, & M^{\ast}(24)=0.86177,\;\ldots,& M^{\ast}(30)=0.81064.
\end{array}
\]
The last value was obtained in a very long time, so for greater values of
$n$\ we generated some random words $w_{1}$, $w_{2}$,..., $w_{\ell}$ of length
$100$, respectively $200$, $300$, $400$ and $500$ over $A=\{0,1\}$ and get
some roughly approximate values $M^{\ast}(n)\simeq\left(  \mathrm{pal}_{w_{1}%
}(n)+...+\mathrm{pal}_{w_{\ell}}(n)\right)  /\ell$. For $\ell=200$\ we obtained
\[%
\begin{array}
[c]{lll}%
M^{\ast}(100)\simeq0.53, & M^{\ast}(200)\simeq0.39, & M^{\ast}(300)\simeq
0.32,\\
M^{\ast}(400)\simeq0.29, & M^{\ast}(500)\simeq0.26. &
\end{array}
\]
This method allows us to obtain the previous exactly computed values $M^{\ast
}(20),$ ..., $M^{\ast}(30)$\ with two exact digits. These numerical results
allow us to formulate the following

\textbf{Conjecture} The sequence $M_{q}(n)/n$ is strictly decreasing for
$n\geq7$.


\begin{thebibliography}{99}                                                                                              
\bibitem {abcd}Allouche, J.-P., Baake, M., Cassaigne, J., Damanik, D., Palindrome complexity,
\textit{Theoret. Comput. Sci.} \textbf{292} (2003) 9-31.

\bibitem {ac}Anisiu, M.-C., Cassaigne, J., Properties of the complexity
function for finite words, \textit{Rev. Anal. Num. Th\'{e}or. Approx}.
\textbf{33} (2004) 123-139.

\bibitem {br}Borel, J.-P., Reutenauer, C., Palindromic factors of billiard words,
\textit{Theoret. Comput. Sci}. $\boldsymbol{340}$ (2005) 334-348.

\bibitem {br1}De Bruijn, N. G., A combinatorial problem, \textit{Nederl. Akad.
Wetensch. Proc.} \textbf{49} (1946) 758-764 = \textit{Indag. Math.} \textbf{8}
(1946) 461-467.

\bibitem {elr}Ehrenfeucht, A., Lee, K. P., Rozenberg, G., Subword complexities
of various classes of deterministic developmental languages without
interactions, \textit{Theoret. Comput. Sci.} \textbf{1 }(1975) 59-75.

\bibitem {f}Flye Sainte-Marie, C., Solution to question nr. 48,
\textit{l'Interm\'{e}diaire des Math\'{e}maticiens} \textbf{1} (1894) 107-110.

\bibitem {fr}Fredricksen, H., A survey of full length nonlinear shift register
cycle algorithms, \textit{SIAM Review} \textbf{24} (1982) 195-221.

\bibitem {g}Games, R. A., A generalized recursive construction for De Bruijn
sequences, \textit{IEEE Trans. Inform. Theory} \textbf{29} (1983) 843-850.

\bibitem {gp}Giel-Pietraszuk, M, Hoffmann, M, Dolecka, S, Rychlewski, J,
Barciszewski, J., Palindromes in proteins, \textit{J. Protein Chem.}
\textbf{22} (2003) 109-113.

\bibitem {good}Good, I. J., Normal recurring decimals, \textit{J. London Math.
Soc.} \textbf{21} (1946) 167-169.

\bibitem {gkp}Graham, R. L., Knuth, D. E., Patashnik, O., \emph{Concrete
Mathematics: A Foundation for Computer Science}, 2nd edition (Reading,
Massachusetts: Addison-Wesley), 1994.

\bibitem {h}Heinz, M., Zur Teilwortkomplexit\"{a}t f\"{u}r W\"{o}rter und
Folgen \"{u}ber einem endlichen Alphabet, \textit{EIK} \textbf{13 }(1977) 27-38.

\bibitem {adl}de Luca, A., On the combinatorics of finite words,
\textit{Theoret. Comput. Sci.} \textbf{218} (1999) 13-39.

\bibitem {aa}de Luca, A., de Luca, Al., Combinatorial properties of Sturmian
palindromes, \textit{Int. J. Found. Comput. Sci.} \textbf{17} (2006) 557-573.

\bibitem {l-s}Lev\'{e}, F., S\'{e}\'{e}bold, P., Proof of a conjecture on word
complexity, \textit{Bull. Belg. Math. Soc. Simon Stevin} \textbf{8 }(2001) 277-291.

\bibitem {m}Martin, M. H., A problem in arrangements, \textit{Bull. American
Math. Soc.} \textbf{40} (1934) 859-864.

\bibitem {mh}Morse, M., Hedlund, G. A., Symbolic dynamics, \textit{Amer. J.
Math.} \textbf{60} (1938) 815-866.

\bibitem {r}Ralston, A., A new memoryless algorithm for De Bruijn sequences,
\textit{J. Algorithms} \textbf{2} (1981) 50-62.
\end{thebibliography}
\end{document}